\DeclareTextFontCommand{\emph}{\slshape}
\renewcommand{\paragraph}{%
	\@startsection{paragraph}{4}%
	{\z@}{1.75ex \@plus 1ex \@minus .2ex}{-0.7em}%
	{\normalfont\normalsize\bfseries}%
}
\let\originalleft\left
\let\originalright\right
\renewcommand{\left}{\mathopen{}\mathclose\bgroup\originalleft}
\renewcommand{\right}{\aftergroup\egroup\originalright}
\pgfplotsset{compat=1.10}
\setlist[enumerate,1]{label=(\arabic*)}
\setlist[itemize,1]{label=--}
\setlist[itemize,2]{label=--}
\setlist[itemize,3]{label=--}
\setlist[itemize,4]{label=--}
\theoremstyle{definition}
\newtheorem{theorem}{Theorem}
\newtheorem{proposition}{Proposition}
\newtheorem{lemma}{Lemma}
\newtheorem{corollary}{Corollary}
\newtheorem{remark}{Remark}
\newtheorem{observation}{Observation}
\newtheorem{definition}{Definition}
\newtheoremstyle{named}
	{\topsep}					
	{\topsep}					
	{}							
	{0pt}						
	{\bfseries}					
	{}							
	{5pt plus 1pt minus 1pt}	
	{\thmnote{#3}}				
\theoremstyle{named}
\newtheorem{namedthm}{}
\xpatchcmd{\proof}{\itshape}{\proofheadfont}{}{}
\newcommand{\proofheadfont}{\slshape}
\crefname{page}{p.}{pp.}
\crefname{equation}{equation}{equations}
\crefname{section}{section}{sections}
\crefname{subsection}{section}{sections}
\crefname{subsubsection}{section}{sections}
\crefname{appsec}{appendix}{appendices}
\crefname{supplsec}{supplemental appendix}{supplemental appendices}
\crefname{footnote}{footnote}{footnotes}
\crefname{figure}{figure}{figures}
\crefname{table}{table}{tables}
\crefname{theorem}{theorem}{theorems}
\crefname{proposition}{proposition}{propositions}
\crefname{lemma}{lemma}{lemmata}
\crefname{corollary}{corollary}{corollaries}
\crefname{remark}{remark}{remarks}
\crefname{observation}{observation}{observations}
\crefname{example}{example}{examples}
\crefname{fact}{fact}{facts}
\crefname{definition}{definition}{definitions}
\crefname{assumption}{assumption}{assumptions}
\crefname{exercise}{exercise}{exercises}
\crefname{notation}{notation}{notation}
\crefname{claim}{claim}{claims}
\crefname{conjecture}{conjecture}{conjectures}
\newcommand{\dd}{\mathrm{d}}
\newcommand*{\xslant}[2][76]{%
	\begingroup
	\sbox0{#2}%
	\pgfmathsetlengthmacro\wdslant{\the\wd0 + cos(#1)*\the\wd0}%
	\leavevmode
	\hbox to \wdslant{\hss
		\tikz[
			baseline=(X.base),
			inner sep=0pt,
			transform canvas={xslant=cos(#1)},
		] \node (X) {\usebox0};%
		\hss
		\vrule width 0pt height\ht0 depth\dp0 %
	}%
	\endgroup
}
\newcommand*{\xslantmath}{}
\def\xslantmath#1#{%
	\@xslantmath{#1}%
}
\newcommand*{\@xslantmath}[2]{%
	\ensuremath{%
		\mathpalette{\@@xslantmath{#1}}{#2}%
	}%
}
\newcommand*{\@@xslantmath}[3]{%
	\xslant#1{$#2#3\m@th$}%
}
\def\namedlabel#1#2{\begingroup
	#2%
	\def\@currentlabel{#2}%
	\phantomsection\label{#1}\endgroup
}
\let\save@mathaccent\mathaccent
\newcommand*\if@single[3]{%
	\setbox0\hbox{${\mathaccent"0362{#1}}^H$}%
	\setbox2\hbox{${\mathaccent"0362{\kern0pt#1}}^H$}%
	\ifdim\ht0=\ht2 #3\else #2\fi
	}
\newcommand*\rel@kern[1]{\kern#1\dimexpr\macc@kerna}
\newcommand*\widebar[1]{\@ifnextchar^{{\wide@bar{#1}{0}}}{\wide@bar{#1}{1}}}
\newcommand*\wide@bar[2]{\if@single{#1}{\wide@bar@{#1}{#2}{1}}{\wide@bar@{#1}{#2}{2}}}
\newcommand*\wide@bar@[3]{%
	\begingroup
	\def\mathaccent##1##2{%
	  \let\mathaccent\save@mathaccent
	  \if#32 \let\macc@nucleus\first@char \fi
	  \setbox\z@\hbox{$\macc@style{\macc@nucleus}_{}$}%
	  \setbox\tw@\hbox{$\macc@style{\macc@nucleus}{}_{}$}%
	  \dimen@\wd\tw@
	  \advance\dimen@-\wd\z@
	  \divide\dimen@ 3
	  \@tempdima\wd\tw@
	  \advance\@tempdima-\scriptspace
	  \divide\@tempdima 10
	  \advance\dimen@-\@tempdima
	  \ifdim\dimen@>\z@ \dimen@0pt\fi
	  \rel@kern{0.6}\kern-\dimen@
	  \if#31
	    \overline{\rel@kern{-0.6}\kern\dimen@\macc@nucleus\rel@kern{0.4}\kern\dimen@}%
	    \advance\dimen@0.4\dimexpr\macc@kerna
	    \let\final@kern#2%
	    \ifdim\dimen@<\z@ \let\final@kern1\fi
	    \if\final@kern1 \kern-\dimen@\fi
	  \else
	    \overline{\rel@kern{-0.6}\kern\dimen@#1}%
	  \fi
	}%
	\macc@depth\@ne
	\let\math@bgroup\@empty \let\math@egroup\macc@set@skewchar
	\mathsurround\z@ \frozen@everymath{\mathgroup\macc@group\relax}%
	\macc@set@skewchar\relax
	\let\mathaccentV\macc@nested@a
	\if#31
	  \macc@nested@a\relax111{#1}%
	\else
	  \def\gobble@till@marker##1\endmarker{}%
	  \futurelet\first@char\gobble@till@marker#1\endmarker
	  \ifcat\noexpand\first@char A\else
	    \def\first@char{}%
	  \fi
	  \macc@nested@a\relax111{\first@char}%
	\fi
	\endgroup
}
\newcommand{\MPS}{\mathrel{M}}
\newcommand{\nMPS}{\mathrel{\centernot{\mathrel{\MPS}}}}
\title{\scshape Statistical discrimination and statistical informativeness\thanks{We thank Chris Chambers for useful comments.}}
\author{%
\begin{tabular}{cc}
	Matteo Escudé
	& Paula Onuchic \\
	LUISS
	& University of Oxford \vspace{1em} \\
	\parbox{\widthof{Quitzé Valenzuela-Stookey}}{\centering Ludvig Sinander}
	& Quitzé Valenzuela-Stookey \\
	University of Oxford
	& Duke University
\end{tabular}%
}
\date{31 May 2022}
\begin{document}

\maketitle

\begin{abstract}
We study the link between Phelps--Aigner--Cain-type statistical discrimination and familiar notions of statistical informativeness. 
Our central insight is that Blackwell's Theorem, suitably relabeled, characterizes statistical discrimination in terms of statistical informativeness.
This delivers one-half of Chambers and Echenique's (\citeyear{ChambersEchenique2021}) characterization of statistical discrimination as a corollary, and suggests a different interpretation of it: that discrimination is inevitable.
In addition, Blackwell's Theorem delivers a number of finer-grained insights into the nature of statistical discrimination.
We argue that the discrimination--informativeness link is quite general, illustrating with an informativeness characterization of a different type of discrimination.
\end{abstract}

\section{Introduction}

\emph{Statistical discrimination} is the umbrella term for a number of interrelated information-economic theories that seek to understand why individuals face discriminatory treatment based on their social identities.
An important strand began with \textcite{Phelps1972} and \textcite{AignerCain1977}, who showed that in a competitive labor market, wage-discrimination can arise even if firms care only about workers' skills (not their identities) and know that skill and identity are statistically independent.\footnote{For an overview of this literature,
	including an account of how the Phelps and Aigner--Cain papers differ, see the surveys by \textcite{FangMoro2011} and \textcite{Onuchic2022}.}
The reason is that for a firm that can observe a worker's skill only imperfectly, identity may be an informative signal about skill, \emph{despite} the fact that identity is uninformative about skill ex ante.
Consequently, two populations that differ only in respect of social identity (not skill) may be paid differently on average.

In a recent paper, \textcite{ChambersEchenique2021} develop a characterization of when and why Phelps--Aigner--Cain-type statistical discrimination arises, in a general setting which eschews the functional-form assumptions on information and payoffs that typify the earlier literature.
This generality is made possible by a conceptual innovation: the authors pose their question in the ``belief-based'' language of the recent literature on information design.\footnote{See \textcite{Kamenica2019} for a survey.}

Chambers and Echenique take as primitive the set of possible observable traits (or ``signals'') that a worker might have, which are statistically informative about her skill.
A worker's trait might comprise her standardized test scores, her CV, and a reference letter, for example.
The authors ask which sets of traits are \emph{non-discriminatory,} in the sense that any two populations with the same skill distribution, but potentially different distributions of traits, are paid the same on average.
Their main result provides an answer.

In this paper, we revisit the question of when and why Phelps--Aigner--Cain statistical discrimination occurs, using Chambers and Echenique's ``belief-based'' formalism. Our key insight is that Blackwell's Theorem, suitably relabeled, characterizes statistical discrimination as arising precisely from the statistical informativeness about skill of a population's observable traits.

We draw several implications.
First, Blackwell's Theorem delivers one-half of Chambers and Echenique's main result as a corollary, and suggests a different interpretation of it: that statistical discrimination is \emph{inevitable.}
Secondly, Blackwell delivers additional, finer-grained insights into how, when and why discrimination occurs.

We further argue that the tight link between statistical discrimination and statistical informativeness embodied by Blackwell's Theorem is quite general, extending beyond the competitive wage-setting environment.
To illustrate, we consider another kind of discrimination:
we alter the model so that a firm can observe a worker's trait only at a cost (by ``interviewing'' her), and show how \emph{ex-ante discrimination} in the form of exclusion from interviews is characterized by statistical informativeness.
Depending on the magnitude of the cost,
the relevant notion of informativeness here is either Blackwell's or that of \textcite{LehrerShmaya2008}.

Beyond clarifying, and broadening, the link between statistical informativeness and various notions of statistical discrimination, this paper contributes to the literature by applying a well-known result in information economics to a workhorse model of discrimination. Our insight is that this standard result (Blackwell's Theorem) delivers a sharp characterization of statistical discrimination.
Our analysis suggests that old and new results in information economics may be harnessed to deepen our understanding of economic theories of discrimination.

\section{Environment}
\label{sec:env}

The setting is that of \textcite{ChambersEchenique2021}, with their notation.
It is a model in the spirit of \textcite{Phelps1972,AignerCain1977}.

Firms are exogenously matched to workers,
and each worker is paid the expected surplus that she generates for her employer.
This surplus depends both on the worker's unobservable skill and on how those skills match up with the firm's technology.
Formally, each worker has a \emph{skill type} $\theta$ drawn from a non-empty finite set $\Theta$ of possible skill types,
and a \emph{firm} is a non-empty finite set $A\subset\mathbb{R}^\Theta$.
The interpretation is that each $a \in A$ is a task to which the firm can assign a worker, and that assigning a worker of skill type $\theta \in \Theta$ to task $a$ yields surplus $a(\theta)\in\mathbb{R}$.

Since worker skill is unobservable, a firm must base its estimate of a worker's skill on that worker's observable trait (or ``signal'').
A worker's trait may comprise her standardized test scores, her reference letters, and her personal charm, for example.
Given the joint distribution of skill types and observable traits in the population from which the worker is drawn, each possible worker trait induces a Bayesian posterior belief $s \in \Delta(\Theta)$ about worker skill.
Rather than model traits directly, \textcite{ChambersEchenique2021} employ the ``belief-based'' formalism of the information design literature:
they identify each trait with the posterior belief $s \in \Delta(\Theta)$ about skill that it induces.
In other words, they label each trait with the posterior that it gives rise to.

The set of possible traits is a measurable subset $\mathcal{S}$ of $\Delta(\Theta)$.
A \emph{population (of workers)} is a distribution $\pi\in\Delta(\mathcal{S})$ of traits. The interpretation is that $\pi(S) \in [0,1]$ is the fraction of workers in this population whose observable trait belongs to $S \subset \mathcal{S}$, for each (measurable) set $S \subset \mathcal{S}$.
The distribution of skill types in a population $\pi \in \Delta(\mathcal{S})$ is denoted $p_\pi \in \Delta(\Theta)$; it is given by
\begin{equation*}
	p_\pi(\theta) \coloneqq \int_\mathcal{S}s(\theta)\pi(\dd s)
	\quad \text{for each $\theta \in \Theta$.}
\end{equation*}
That is, $p_\pi(\theta)$ is the fraction of workers in the population $\pi$ whose skill is $\theta$.

Observe that populations can differ both in their underlying skill distributions and in how informative their observable traits are about skill. That is, two populations $\pi,\pi' \in \Delta(\mathcal{S})$ may have different skill distributions $p_\pi \neq p_{\pi'}$, and even if not, they may still differ ($\pi \neq \pi'$) because their distributions of observable traits are differentially informative about skill.
To illustrate, consider two populations that have taken different standardized tests: the SAT and the ACT, respectively.
Differences between these populations in the distribution of a Bayesian firm's beliefs about members' skill types may arise both from differences in the populations' actual skill distributions and from differences in the statistical informativeness about skill of the SAT and ACT. Alternatively, differences in a firm's beliefs about members' skill types can arise even if the populations take the same test, provided there is some characteristic of the population that affects test outcomes.

Finally, recall that a \emph{firm} is a non-empty finite set $A \subset \mathbb{R}^\Theta$ of tasks. Firm $A$ assigns its worker to whichever task $a \in A$ yields the highest expected surplus, given its information about the worker's skill type.
Thus the expected surplus of a firm $A$ faced with a worker whose observable trait is $s \in \mathcal{S}$ is
\begin{equation*}
	v_A(s) \coloneqq \max_{a\in A}
	\sum_{\theta \in \Theta} a(\theta) s(\theta) .
\end{equation*}
Each worker is paid the expected surplus that she generates.

\section{The Chambers--Echenique result}
\label{sec:ce}

In this section, we state the first half of Chambers and Echenique's (\citeyear{ChambersEchenique2021}) main result, and suggest an alternative, dismal interpretation of it.

\begin{definition}[non-discriminatory]
\label{def:nondisc}
A set $\mathcal{S}$ of traits is \emph{non-discriminatory} if
for any populations $\pi, \pi'\in\Delta(\mathcal{S})$
and any firm $A \subset \mathbb{R}^\Theta$,
if $p_\pi=p_{\pi'}$, then
$\int_\mathcal{S}v_A \dd \pi=\int_\mathcal{S}v_A \dd \pi'$.
\end{definition}

The essence of this definition is that an environment is free from discrimination whenever any two populations with the same skill distribution receive, on average, the same remuneration.
Formally, the definition requires this to hold \emph{at every firm,} separately.

\begin{definition}[identification]
A set $\mathcal{S}$ of traits is \emph{identified} if for any $\pi,\pi'\in\Delta(\mathcal{S})$, $p_\pi=p_{\pi'}$ implies that $\pi=\pi'$.
\end{definition}

\textcite{ChambersEchenique2021} give an ``econometric'' interpretation of identification:
it requires that the ``parameter'' $\pi$ be recoverable from the ``observable magnitude'' $p_\pi$.

\begin{theorem}[Chambers--Echenique]
\label{th:1}
A set $\mathcal{S}$ of traits is non-discriminatory if and only if it is identified.
\end{theorem}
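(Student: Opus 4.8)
The plan is to handle the two implications separately. The direction ``identified $\Rightarrow$ non-discriminatory'' is immediate: if $\mathcal{S}$ is identified and two populations $\pi,\pi'\in\Delta(\mathcal{S})$ satisfy $p_\pi=p_{\pi'}$, then $\pi=\pi'$, so $\int_\mathcal{S}v_A\,\dd\pi=\int_\mathcal{S}v_A\,\dd\pi'$ for every firm $A$ trivially. All the work is in the converse, which I would prove in contrapositive form: assuming $\mathcal{S}$ is \emph{not} identified, fix populations $\pi\neq\pi'$ with $p_\pi=p_{\pi'}$, and exhibit a firm $A$ at which these two populations are paid differently on average.

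First I would reduce this to a density statement. Since $\Theta$ is finite, $\Delta(\Theta)$ is a compact metric space, so the distinct Borel measures $\pi,\pi'$ are separated by some $f\in C(\Delta(\Theta))$, i.e.\ $\int f\,\dd\pi\neq\int f\,\dd\pi'$; it then suffices to approximate $f$ uniformly by functions assembled from firms. The structural fact I would exploit is that $\mathcal{V}\coloneqq\{v_A : A\subset\mathbb{R}^\Theta\text{ finite and nonempty}\}$ is, as a subset of $C(\Delta(\Theta))$, a convex cone that is closed under pointwise maxima and contains all affine functions: singletons give the linear functions $v_{\{a\}}\colon s\mapsto\sum_\theta a(\theta)s(\theta)$ (hence also the constants, as $\sum_\theta s(\theta)=1$ on the simplex), $\lambda v_A=v_{\lambda A}$ for $\lambda\geq 0$, $\max(v_A,v_B)=v_{A\cup B}$, and --- the key identity --- $v_A+v_B=v_{A+B}$ with $A+B$ the Minkowski sum. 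Consequently the difference set $\mathcal{V}-\mathcal{V}$ is a linear subspace of $C(\Delta(\Theta))$ that is a sublattice (from $\max(f_1-g_1,f_2-g_2)=\max(f_1+g_2,f_2+g_1)-(g_1+g_2)$), contains the constants, and separates points, so by the lattice form of the Stone--Weierstrass theorem it is sup-norm dense. The nonzero bounded linear functional $g\mapsto\int g\,\dd\pi-\int g\,\dd\pi'$ is therefore nonzero on some $v_A-v_B\in\mathcal{V}-\mathcal{V}$, whence at least one of $v_A$, $v_B$ integrates differently against $\pi$ and $\pi'$; that firm witnesses that $\mathcal{S}$ is not non-discriminatory.

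I expect the density step to be the crux --- in particular, the recognition that the value functions of firms form a max-stable convex cone, where the Minkowski-sum identity $v_A+v_B=v_{A+B}$ is the slightly non-obvious ingredient that lets Stone--Weierstrass apply to their differences. I would also flag two lighter points: the measurability convention for $\mathcal{S}$ (cleanest to regard $\pi,\pi'$ as Borel measures on all of $\Delta(\Theta)$, so that ``$\pi\neq\pi'$'' is genuinely detected by a continuous test function), and the observation that this argument is just Blackwell's theorem relabeled --- with the common prior held fixed, firms are finite decision problems and $v_A$ their value functions, so ``$\int v_A\,\dd\pi=\int v_A\,\dd\pi'$ for all $A$'' says $\pi$ and $\pi'$ are mean-preserving spreads of each other, which forces $\pi=\pi'$.
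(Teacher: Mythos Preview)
Your proposal is correct, and your closing remark---that the argument is Blackwell's theorem relabeled---is precisely the paper's organizing insight. The paper packages the proof as: invoke Blackwell's theorem (as a black box) to obtain $\pi \MPS \pi' \MPS \pi$ whenever no firm distinguishes $\pi$ from $\pi'$, and then appeal to anti-symmetry of $\MPS$, which the appendix proves via the \emph{algebra} form of Stone--Weierstrass applied to differences of continuous convex functions (citing Hartman for closure under pointwise multiplication). You instead go directly at $\mathcal{V}-\mathcal{V}$ with the \emph{lattice} form of Stone--Weierstrass, using the Minkowski-sum identity $v_A+v_B=v_{A+B}$ to make it a linear subspace and $\max(v_A,v_B)=v_{A\cup B}$ to make it a sublattice. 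Your route is more elementary---it avoids the nontrivial fact that DC functions are closed under products---and stays inside the class of firms' value functions, so the separating firm is produced explicitly; the paper's route, on the other hand, cleanly isolates Blackwell's theorem as the engine, which is the expository point it wants to make.
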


This result is one-half of Chambers and Echenique's (\citeyear{ChambersEchenique2021}) Theorem 1;
the remaining half asserts that these two properties are equivalent to a third property, on which we shall not comment.

\subsection{A dismal interpretation: inevitable discrimination}
\label{sec:ce:identification}

We suggest the following alternative interpretation of ``identification'': there \emph{are} no distinct populations $\pi,\pi' \in \Delta(\mathcal{S})$ with the same skill distribution.

On our interpretation, the ``if'' part of \Cref{th:1} is clear:
identification implies that non-discrimination holds \emph{vacuously,} because there simply aren't any populations between whom discrimination could occur according to \Cref{def:nondisc}.
(That is: under identification, any two populations $\pi,\pi'$ with the same skill distribution $p_\pi = p_{\pi'}$ must be one and the same population $\pi = \pi'$, so trivially receive the same remuneration.)

The ``only if'' part of \Cref{th:1} therefore says precisely that discrimination is inevitable:
whenever there \emph{are} distinct populations with the same skill distribution, there will be discrimination between them.

\section{Discrimination and informativeness}
\label{sec:blackwell}

In this section, we state Blackwell's Theorem with terms relabeled, and show that it entails \Cref{th:1} as well as further insights into discrimination.

\begin{definition}
	\label{definition:disc}
	For populations $\pi,\pi' \in \Delta(\mathcal{S})$ with the same skill distribution ($p_\pi=p_{\pi'}$):
	\begin{enumerate}
	
		\item Firm $A \subset \mathbb{R}^\Theta$ \emph{discriminates (strictly) against} population $\pi$ if
		\begin{equation*}
			\int_{\mathcal{S}} v_A \dd \pi 
			\leqslant\; \mathrel{(<)} \int_{\mathcal{S}} v_A \dd \pi' .
		\end{equation*}	

		\item There is \emph{systematic discrimination} against population $\pi$ if every firm discriminates against $\pi$,
		and some firm discriminates strictly against $\pi$.

		\item There is \emph{unsystematic discrimination} if there are firms $A,A' \subset \mathbb{R}^\Theta$ such that firm $A$ ($A'$) discriminates strictly against $\pi$ ($\pi'$).

		\item There is \emph{no discrimination} if there is neither systematic nor unsystematic discrimination.
	
	\end{enumerate}
\end{definition}

Write $\pi' \MPS \pi$ when $\pi'$ is a mean-preserving spread of $\pi$.\footnote{That is: $\int_{\mathcal{S}} h \dd \pi' \geqslant \int_{\mathcal{S}} h \dd \pi$ for every convex and continuous function $h: \Delta(\Theta) \rightarrow \mathbb{R}$.}
Intuitively, $\pi' \MPS \pi$ means that $\pi'$ is ``more spread out'' than $\pi$, and thus more informative.

\begin{namedthm}[Blackwell's Theorem.]
	\label{theorem:Blackwell}
	For two populations $\pi,\pi' \in \Delta(\mathcal{S})$ with the same skill distribution ($p_\pi=p_{\pi'}$),
	$\pi' \MPS \pi$
	holds if and only if there is either (i) systematic discrimination against $\pi$ or (ii) no discrimination.
\end{namedthm}

This is (one version of) Blackwell's Theorem \parencite{Blackwell1951,Blackwell1953}, with terms relabeled.
The standard labels are as follows:
$\pi,\pi'$ are two \emph{distributions of posterior beliefs} about the unknown \emph{state} $\theta$,
$p_\pi = p_{\pi'}$ is the \emph{prior belief} about the state,
and ``systematic discrimination against $\pi$''
reads ``$\pi'$ is Blackwell strictly more informative than $\pi$.''\footnote{Recall that the Blackwell order is defined as follows.
A \emph{decision problem} is a non-empty finite set $A \subset \mathbb{R}^\Theta$, where $a(\theta)$ is the payoff of action $a$ if the state is $\theta$; its \emph{value} given (posterior) belief $s \in \Delta(\Theta)$ about the state is $v_A(s) = \max_{a\in A} \sum_{\theta \in \Theta} a(\theta) s(\theta)$. We say that $\pi'$ is \emph{Blackwell (strictly) more informative} than $\pi$ exactly if $\pi'$ delivers a weakly greater expected value than $\pi$ in any decision problem (and strictly in some decision problem).}

\begin{corollary}
	\label{corollary:blackwell}
	Consider two populations $\pi,\pi' \in \Delta(\mathcal{S})$ with the same skill distribution ($p_\pi=p_{\pi'}$).
	\begin{enumerate}[label=(\alph*)]
	
		\item \label{item:syst}
		There is systematic discrimination against $\pi$ if and only if $\pi' \MPS \pi \neq \pi'$.

		\item \label{item:unsyst}
		There is unsystematic discrimination if and only if $\pi \nMPS \pi' \nMPS \pi$.

		\item \label{item:nodisc}
		There is no discrimination if and only if $\pi = \pi'$.
	
	\end{enumerate}
\end{corollary}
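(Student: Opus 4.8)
The plan is to obtain all three parts from Blackwell's Theorem together with one structural observation: the four cases listed in \Cref{definition:disc}---systematic discrimination against $\pi$, systematic discrimination against $\pi'$, unsystematic discrimination, and no discrimination---partition the pairs $(\pi,\pi')$ with $p_\pi = p_{\pi'}$. First I would verify this by casework on the two yes/no questions ``does some firm discriminate strictly against $\pi$?'' and ``does some firm discriminate strictly against $\pi'$?''. ``No'' to both is ``no discrimination'' by definition; ``yes'' then ``no'' means every firm discriminates weakly against $\pi$ (because no firm strictly discriminates against $\pi'$) and some firm does so strictly, i.e.\ systematic discrimination against $\pi$; ``no'' then ``yes'' is the mirror image; and ``yes'' to both is exactly unsystematic discrimination. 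Each of the four cases falls into precisely one named category, so the four are mutually exclusive and jointly exhaustive.

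Second, I would apply Blackwell's Theorem to the ordered pair $(\pi,\pi')$ and then, with the roles of $\pi$ and $\pi'$ interchanged (noting that ``no discrimination'' is symmetric in the two populations), to $(\pi',\pi)$. This yields: $\pi' \MPS \pi$ iff there is systematic discrimination against $\pi$ or no discrimination; and $\pi \MPS \pi'$ iff there is systematic discrimination against $\pi'$ or no discrimination. Negating and invoking the partition from the first step, $\pi' \nMPS \pi$ iff there is systematic discrimination against $\pi'$ or unsystematic discrimination, and symmetrically $\pi \nMPS \pi'$ iff there is systematic discrimination against $\pi$ or unsystematic discrimination.

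Third, I would read off the three equivalences. For part~\ref{item:nodisc}: if $\pi = \pi'$ every firm is indifferent, so there is no discrimination; conversely, ``no discrimination'' gives, via the two applications of Blackwell, both $\pi' \MPS \pi$ and $\pi \MPS \pi'$, and antisymmetry of the mean-preserving-spread order then forces $\pi = \pi'$. For part~\ref{item:syst}: systematic discrimination against $\pi$ excludes ``no discrimination'', hence by part~\ref{item:nodisc} forces $\pi \neq \pi'$, while Blackwell gives $\pi' \MPS \pi$; conversely, $\pi' \MPS \pi$ with $\pi \neq \pi'$ yields systematic discrimination against $\pi$ or no discrimination, the latter being impossible since $\pi \neq \pi'$. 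For part~\ref{item:unsyst}: combining the two $\nMPS$ reformulations from the second step, $\pi' \nMPS \pi$ and $\pi \nMPS \pi'$ hold together iff unsystematic discrimination occurs, since the two ``systematic'' alternatives are mutually exclusive and so cannot both be the active disjunct, forcing ``unsystematic'' in both.

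I expect the only step carrying real content to be the antisymmetry of the mean-preserving-spread order used in part~\ref{item:nodisc}: if $\int h\,\dd\pi = \int h\,\dd\pi'$ for every continuous convex $h : \Delta(\Theta) \to \mathbb{R}$, then the same identity holds for every continuous $h$, since differences of continuous convex functions are uniformly dense in $C(\Delta(\Theta))$ (with $\Delta(\Theta)$ compact), whence $\pi = \pi'$. One could sidestep even this by arguing directly that ``no discrimination'' forces $\int v_A\,\dd\pi = \int v_A\,\dd\pi'$ for every finite $A \subset \mathbb{R}^\Theta$, and that these piecewise-linear convex functions $v_A$ already have uniformly dense linear span, again giving $\pi = \pi'$. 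Everything else is routine bookkeeping with the partition and the two directions of Blackwell's Theorem.
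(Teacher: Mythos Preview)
Your proposal is correct and follows essentially the same route as the paper: both arguments rest on Blackwell's Theorem (applied in each direction) together with the anti-symmetry of $\MPS$, and the paper likewise relegates the anti-symmetry argument to a separate density-of-convex-differences step. Your presentation is somewhat more explicit in first establishing the four-way partition of cases, whereas the paper leaves this implicit (its one-line ``part~\ref{item:unsyst} is exactly the contrapositive of Blackwell's Theorem'' tacitly relies on the same casework you spell out), but the underlying logic is identical.
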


In proving this corollary below,
we make use of the anti-symmetry of the order $\MPS$.
For completeness, we give a proof of this property in the \hyperref[sec:appendix]{appendix}.

\begin{proof}
	Part \ref{item:unsyst} is exactly (the contra-positive of) \hyperref[theorem:Blackwell]{Blackwell's Theorem}.

	For part \ref{item:nodisc},
	obviously $\pi = \pi'$ implies no discrimination,
	while if there is no discrimination
	then $\pi \MPS \pi' \MPS \pi$ by \hyperref[theorem:Blackwell]{Blackwell's Theorem},
	whence $\pi=\pi'$ since $\MPS$ is anti-symmetric.

	For the ``if'' direction of part \ref{item:syst},
	if $\pi' \MPS \pi \neq \pi'$,
	then by \hyperref[theorem:Blackwell]{Blackwell's Theorem} we have either (i) or (ii),
	and by part \ref{item:nodisc} it cannot be (ii).
	For the ``only if'' direction,
	if there is systematic discrimination against $\pi$
	then $\pi' \MPS \pi$ by \hyperref[theorem:Blackwell]{Blackwell's Theorem},
	and $\pi \neq \pi'$ by part \ref{item:nodisc}.
\end{proof}

\Cref{corollary:blackwell} immediately implies the non-trivial half of \Cref{th:1}:

\begin{proof}[Proof of \Cref{th:1}]
	Suppose that $\mathcal{S}$ is identified.
	Then no two distinct populations $\pi,\pi' \in \Delta(\mathcal{S})$ have $p_\pi=p_{\pi'}$, so non-discrimination holds (vacuously). 

	Suppose that $\mathcal{S}$ is not identified.
	Then there are distinct populations $\pi,\pi' \in \Delta(\mathcal{S})$ with $p_\pi=p_{\pi'}$.
	So by \Cref{corollary:blackwell}\ref{item:nodisc}, there is discrimination.
\end{proof}

Furthermore, whereas \Cref{th:1} tells us only that discrimination is inevitable, 
\Cref{corollary:blackwell} provides more nuanced information.
\emph{Systematic} discrimination, in the form of lower average pay across \emph{all} firms, is experienced precisely by those populations that are less informative, in the sense of $\MPS$. By contrast, populations that are distinct but not $\MPS$-ranked experience mixed fortunes: higher average pay at some firms, and lower average pay at others.

\Cref{corollary:blackwell} also bears on the question of whether statistical discrimination is best understood as arising from differences of informativeness (in the $\MPS$ sense) between populations.
\textcite{ChambersEchenique2021} argue that
``the focus on informativeness in \textcite{Phelps1972,AignerCain1977} is misleading. There may be statistical discrimination even when \dots one population is not more informative than the other.''
\Cref{corollary:blackwell} synthesizes the views on both sides:
\emph{systematic} discrimination occurs precisely against less informative populations (vindicating Phelps and Aigner \& Cain),
while the \emph{unsystematic} discrimination allowed by Chambers and Echenique is a broader (indeed, inevitable) phenomenon.
However, informativeness is the key even in the latter case:
a population experiences discrimination (systematic or not)
exactly if it is \emph{not more informative} in the $\MPS$ sense.

\Cref{corollary:blackwell} may be relevant for policy debates about biases in standardized evaluations. For example, the SAT, which is used for college admissions decisions in the United States, has been criticized for framing questions in a way that favors white students \parencite[e.g.][]{KimZabelina2015}.
If the effect is to increase the statistical informativeness of white students' scores, then \Cref{corollary:blackwell}\ref{item:syst} suggests that systematic discrimination will result.
More generally, \Cref{corollary:blackwell}\ref{item:nodisc} predicts that if the framing of questions affects different populations' test-score distributions differently (in any way), then it will cause discrimination of some form.

\section{The broader connection between discrimination and informativeness}
\label{sec:broad}

The previous section uncovered a tight link between discrimination and informativeness. This link may be used to characterize other forms of statistical discrimination. In this section, we explore one such extension.

The preceding analysis considered discrimination on the intensive margin: all workers are employed, but some populations may receive lower average pay. Another concern is extensive-margin discrimination: workers from some populations may find it harder to obtain employment. We call this \emph{ex-ante discrimination.}

To incorporate ex-ante discrimination into the model studied above, we interpret a firm's observation of a worker's trait as the outcome of a screening or interview process. We augment the model with an ex-ante stage at which a firm observes only to which population a worker belongs, and decides whether to learn the worker's trait by interviewing her. This process is costly, and thus a firm undertakes to interview a worker if and only if the expected value from doing so exceeds the cost. We are concerned with discrimination across populations at this ex-ante stage.

Formally, we extend the above model as follows.
Each firm observes to which population a worker belongs, and can additionally observe the worker's trait at a cost $c > 0$.
Having interviewed a worker, the firm decides whether or not to hire her. If the worker is hired, then she is optimally assigned a task as before, generating some expected surplus;
if she is not hired, then surplus is zero.
To make the firm's interviewing decision non-trivial, we assume that it retains some positive fraction $\alpha \in (0,1]$ of the surplus generated by any worker whom it hires.

A firm is now $(A,\alpha)$, where $A$ is a non-empty finite subset of $\mathbb{R}^\Theta$ as before, and $0 < \alpha \leqslant 1$.
The interview cost $c > 0$ is the same for all firms.
Once a firm has interviewed a worker and learned her trait $s \in \mathcal{S}$, it hires her exactly if $v_A(s) > 0$; thus expected surplus is $\max\{v_A(s),0\}$.
The expected surplus from interviewing a worker belonging to population $\pi \in \Delta(\mathcal{S})$ is thus $\int_{\mathcal{S}} \max\{v_A,0\} \dd \pi$,
so such a worker is interviewed if and only if $\alpha\int_{\mathcal{S}} \max\{v_A,0\} \dd \pi > c$.
Say that a population is \emph{excluded} by firm $(A,\alpha)$ if it is not interviewed. 

\begin{definition}
	\label{definition:hiring_disc}
	Fix the interview cost $c > 0$.
	For populations $\pi,\pi' \in \Delta(\mathcal{S})$ with the same skill distribution ($p_\pi=p_{\pi'}$):
	\begin{enumerate}
	
		\item There is \emph{systematic ex-ante discrimination} against $\pi$ if every firm that excludes $\pi'$ also excludes $\pi$, and some firm excludes $\pi$ but not $\pi'$.

		\item There is \emph{unsystematic ex-ante discrimination} if there are firms that exclude $\pi$ but not $\pi'$
		and firms that exclude $\pi'$ but not $\pi$.

		\item There is \emph{no ex-ante discrimination} if there is neither systematic nor unsystematic ex-ante discrimination.
	
	\end{enumerate}
\end{definition}

\begin{observation}\label{obs:normalization}
Although our definition of ex-ante discrimination quantifies over all firms $(A,\alpha)$, it actually makes no difference if only those with $\alpha=1$ are considered.
That is because for any firm $(A,\alpha)$, the firm $(A',1)$ with $A' = \{\alpha a : a \in A \}$ excludes the same set of populations.
\end{observation}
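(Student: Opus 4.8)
The plan is to reduce the observation to the single assertion made at its end: that for any firm $(A,\alpha)$, the firm $(A',1)$ with $A'=\{\alpha a : a\in A\}$ excludes exactly the same populations. Granting this, the conclusion is immediate. Each of the three clauses of \Cref{definition:hiring_disc} — systematic, unsystematic, and no ex-ante discrimination — is phrased solely in terms of which firms exclude $\pi$ and which exclude $\pi'$. The firms with $\alpha=1$ are a sub-collection of all firms, so any exclusion behaviour they exhibit is a fortiori exhibited by some firm; and conversely, by the claim, the exclusion behaviour of an arbitrary firm $(A,\alpha)$ is matched by the $\alpha=1$ firm $(A',1)$. Hence the family of exclusion patterns witnessed by $\alpha=1$ firms coincides with that witnessed by all firms, so the truth value of each clause is unchanged when the quantifier is restricted to $\alpha=1$.

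To prove the claim, I would first compute $v_{A'}$. Since $\alpha>0$,
\[
	v_{A'}(s) = \max_{a'\in A'}\sum_{\theta\in\Theta} a'(\theta)s(\theta)
	= \max_{a\in A}\sum_{\theta\in\Theta}\alpha a(\theta)s(\theta)
	= \alpha\, v_A(s)
	\qquad\text{for every } s\in\mathcal{S},
\]
the middle equality using that scaling by a positive constant commutes with $\max$. Applying this once more (positivity of $\alpha$ again) gives $\max\{v_{A'}(s),0\}=\alpha\max\{v_A(s),0\}$, and integrating against any $\pi\in\Delta(\mathcal{S})$ yields $\int_{\mathcal{S}}\max\{v_{A'},0\}\,\dd\pi = \alpha\int_{\mathcal{S}}\max\{v_A,0\}\,\dd\pi$. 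Therefore the interviewing inequality for $(A',1)$, namely $1\cdot\int_{\mathcal{S}}\max\{v_{A'},0\}\,\dd\pi>c$, is literally the same inequality as $\alpha\int_{\mathcal{S}}\max\{v_A,0\}\,\dd\pi>c$, which is the interviewing inequality for $(A,\alpha)$. So $(A,\alpha)$ interviews $\pi$ if and only if $(A',1)$ does, i.e. the two firms exclude the same populations. One should also note in passing that $(A',1)$ is a legitimate firm: $A'$ is a non-empty finite subset of $\mathbb{R}^\Theta$, and $1\in(0,1]$.

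There is no real obstacle here; the argument is a one-line scaling computation together with the remark that the three notions of ex-ante discrimination depend only on exclusion patterns. The only points requiring a modicum of care are to invoke \emph{strict} positivity of $\alpha$ at both places where a scalar is pulled through a maximum (the $\max_{a\in A}$ and the $\max\{\cdot,0\}$), and — to be fully explicit — to spell out that ``makes no difference'' means that the three clauses of \Cref{definition:hiring_disc} have the same truth value under the two quantifier ranges, which is exactly the equality of exclusion-pattern families established above.
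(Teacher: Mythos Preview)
Your proposal is correct and takes essentially the same approach as the paper: the paper gives no separate proof, merely embedding the one-line justification (``the firm $(A',1)$ with $A'=\{\alpha a:a\in A\}$ excludes the same set of populations'') in the observation itself. Your argument is a faithful and careful expansion of that sentence, making explicit the scaling computation $v_{A'}=\alpha v_A$ and the reduction of the discrimination clauses to exclusion patterns.
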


The following characterizes ex-ante discrimination.

\begin{proposition}
\label{prop:1}
	\label{proposition:lehrer-shmaya}
	Fix an interview cost $c > 0$,
	and consider two populations $\pi,\pi' \in \Delta(\mathcal{S})$ with the same skill distribution ($p_\pi=p_{\pi'}$).
	\begin{enumerate}[label=(\alph*)]
	
		\item \label{item:syst_ls}
		There is systematic ex-ante discrimination against $\pi$ if and only if $\pi' \MPS \pi \neq \pi'$.

		\item \label{item:unsyst_ls}
		There is unsystematic ex-ante discrimination if and only if $\pi \nMPS \pi' \nMPS \pi$.

		\item \label{item:nodisc_ls}
		There is no ex-ante discrimination if and only if $\pi = \pi'$.
	
	\end{enumerate}
\end{proposition}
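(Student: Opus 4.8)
The plan is to reduce the three claims, via \Cref{obs:normalization}, to statements about the functionals $V_A(\pi) \coloneqq \int_{\mathcal{S}} \max\{v_A,0\}\,\dd\pi$ (with $A$ ranging over non-empty finite subsets of $\mathbb{R}^\Theta$), and then to recognise these as exactly the objects governed by \hyperref[theorem:Blackwell]{Blackwell's Theorem}. By \Cref{obs:normalization} it suffices to consider firms $(A,1)$, and such a firm excludes a population $\pi$ precisely when $V_A(\pi) \leqslant c$. The first thing to note is that $\max\{v_A,0\} = v_{A\cup\{0\}}$, where $0\in\mathbb{R}^\Theta$ is the zero task; hence $V_A(\pi) = \int_{\mathcal{S}} v_{A\cup\{0\}}\,\dd\pi$ is the integral of a convex continuous function of the belief, so that $\pi'\MPS\pi$ already implies $V_A(\pi)\leqslant V_A(\pi')$ for every such $A$.

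The crux is the converse direction, namely that
\begin{equation*}
	V_A(\pi) \leqslant V_A(\pi') \text{ for every non-empty finite } A\subset\mathbb{R}^\Theta \quad\Longleftrightarrow\quad \pi'\MPS\pi .
\end{equation*}
The implication ``$\Leftarrow$'' was just noted. For ``$\Rightarrow$'', fix an arbitrary non-empty finite $A$ and translate it by a large constant: put $A_k\coloneqq\{a+k\mathbf{1}:a\in A\}$, where $\mathbf{1}\in\mathbb{R}^\Theta$ is the constant task equal to $1$ and $k$ is large enough that $v_A+k\geqslant 0$ throughout the compact set $\Delta(\Theta)$ (possible since $v_A$ is continuous, hence bounded, there). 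Then $v_{A_k}=v_A+k\geqslant 0$, so $V_{A_k}(\pi)=\int_{\mathcal{S}}(v_A+k)\,\dd\pi = k+\int_{\mathcal{S}}v_A\,\dd\pi$, and likewise for $\pi'$; the hypothesis applied to $A_k$ thus gives $\int_{\mathcal{S}}v_A\,\dd\pi\leqslant\int_{\mathcal{S}}v_A\,\dd\pi'$. Since $A$ was arbitrary, this says that every firm (weakly) discriminates against $\pi$ in the intensive-margin sense of \Cref{definition:disc}, which by \hyperref[theorem:Blackwell]{Blackwell's Theorem} and \Cref{corollary:blackwell} (and anti-symmetry of $\MPS$) is equivalent to $\pi'\MPS\pi$. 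I expect this reduction --- showing that comparing only the functionals $V_A$, equivalently only decision problems that contain the zero task, already pins down the full $\MPS$ order --- to be the main obstacle; the constant-translation trick is what removes it.

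The second ingredient converts a \emph{strict} value inequality into an actual separating firm, by scaling: for $t>0$ the firm $(tA,1)$, with $tA\coloneqq\{ta:a\in A\}$, is admissible and $V_{tA}(\pi)=t\,V_A(\pi)$. Hence a firm excluding $\pi$ but not $\pi'$ exists iff $V_A(\pi)\leqslant c<V_A(\pi')$ for some $A$, which --- given $A_0$ with $0\leqslant V_{A_0}(\pi)<V_{A_0}(\pi')$, one chooses $t>0$ with $t\,V_{A_0}(\pi)\leqslant c<t\,V_{A_0}(\pi')$ (using $c>0$) --- holds iff $V_A(\pi)<V_A(\pi')$ for some $A$, i.e.\ (by the crux, with the roles of $\pi,\pi'$ swapped and negated) iff $\pi\nMPS\pi'$. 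Symmetrically, a firm excluding $\pi'$ but not $\pi$ exists iff $\pi'\nMPS\pi$.

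It then remains to assemble the three parts, exactly as in the proof of \Cref{corollary:blackwell}. Part \ref{item:unsyst_ls} is immediate, since unsystematic ex-ante discrimination means that both types of separating firm exist. For part \ref{item:syst_ls}: systematic ex-ante discrimination against $\pi$ means that no firm excludes $\pi'$ without also excluding $\pi$ (i.e.\ $\pi'\MPS\pi$, by the previous paragraph) and that some firm excludes $\pi$ but not $\pi'$ (i.e.\ $\pi\nMPS\pi'$); using anti-symmetry of $\MPS$, this conjunction is equivalent to $\pi'\MPS\pi\neq\pi'$. Part \ref{item:nodisc_ls} then drops out: if $\pi=\pi'$ then no firm separates the two populations, so there is no ex-ante discrimination; and if $\pi\neq\pi'$ then either $\pi$ and $\pi'$ are $\MPS$-comparable --- giving systematic ex-ante discrimination against the $\MPS$-smaller one via \ref{item:syst_ls} --- or they are incomparable, in which case $\pi\nMPS\pi'\nMPS\pi$ and \ref{item:unsyst_ls} delivers unsystematic ex-ante discrimination.
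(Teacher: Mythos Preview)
Your proof is correct and follows essentially the same approach as the paper: both use the identity $\max\{v_A,0\}=v_{A\cup\{0\}}$, the translation trick $A\mapsto\{a+k\mathbf{1}:a\in A\}$ to strip away the positive-part, and the scaling trick $A\mapsto tA$ to hit the threshold $c$. The only difference is organizational: the paper packages the argument as a Blackwell-type lemma (\Cref{lemma:LS}) and then invokes the \Cref{corollary:blackwell} template, whereas you establish directly the biconditional ``a firm excluding $\pi$ but not $\pi'$ exists iff $\pi\nMPS\pi'$'' and assemble parts \ref{item:syst_ls}--\ref{item:nodisc_ls} from it.
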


The dismal conclusion persists: whenever there are distinct populations with the same skill distribution, there will be ex-ante discrimination.
It is notable that \Cref{proposition:lehrer-shmaya}'s characterization of (systematically) ex-ante discriminated-against populations does not depend on the magnitude of the interview cost $c>0$, and that it coincides exactly with the characterization of pay discrimination in \Cref{corollary:blackwell}:
in all cases, discrimination tracks informativeness in the sense of $\MPS$.

The key to \Cref{proposition:lehrer-shmaya} is the following Blackwell-like lemma.
The proposition follows from this lemma by the exact same argument used in \cref{sec:blackwell} to deduce \Cref{corollary:blackwell} from \hyperref[theorem:Blackwell]{Blackwell's Theorem}.

\begin{lemma}
	\label{lemma:LS}
	For two populations $\pi,\pi' \in \Delta(\mathcal{S})$ with the same skill distribution ($p_\pi=p_{\pi'}$),
	$\pi' \MPS \pi$
	holds if and only if there is either (i) systematic ex-ante discrimination against $\pi$ or (ii) no ex-ante discrimination.
\end{lemma}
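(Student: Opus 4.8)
The plan is to reduce the lemma, via a short piece of logic, to something that \hyperref[theorem:Blackwell]{Blackwell's Theorem} can finish off. Write $d_1$ for ``some firm excludes $\pi$ but not $\pi'$'' and $d_2$ for ``some firm excludes $\pi'$ but not $\pi$''; then the three scenarios in \Cref{definition:hiring_disc} are $d_1\wedge\neg d_2$ (systematic against $\pi$), $d_2\wedge\neg d_1$ (systematic against $\pi'$) and $d_1\wedge d_2$ (unsystematic), so that ``(i) or (ii)'' is simply $\neg d_2$, i.e.\ \emph{every firm that excludes $\pi'$ also excludes $\pi$}. By \Cref{obs:normalization} it suffices to quantify over firms $(A,1)$, so the lemma reduces to the claim that $\pi'\MPS\pi$ holds if and only if, for every non-empty finite $A\subset\mathbb{R}^\Theta$,
\begin{equation*}
	\int_{\mathcal{S}}\max\{v_A,0\}\dd\pi'\leqslant c
	\quad\Longrightarrow\quad
	\int_{\mathcal{S}}\max\{v_A,0\}\dd\pi\leqslant c .
\end{equation*}

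The ``only if'' direction is immediate once one notes that $\max\{v_A,0\}=v_{A\cup\{\mathbf 0\}}$, where $\mathbf 0$ is the zero vector in $\mathbb{R}^\Theta$, so $\max\{v_A,0\}$ is convex and continuous on $\Delta(\Theta)$; hence $\pi'\MPS\pi$ gives $\int_{\mathcal S}\max\{v_A,0\}\dd\pi'\geqslant\int_{\mathcal S}\max\{v_A,0\}\dd\pi$ for every such $A$, which yields the displayed implication (indeed with $c$ replaced by any constant).

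For the ``if'' direction — the substantive one — I would exploit that decision problems can be rescaled and translated. Since $v_{\lambda A}=\lambda v_A$ for $\lambda>0$, feeding $\lambda A$ into the hypothesis for every $\lambda>c\big/\!\int_{\mathcal S}\max\{v_A,0\}\dd\pi$ and letting $\lambda$ decrease to that bound upgrades the hypothesis to $\int_{\mathcal S}\max\{v_A,0\}\dd\pi'\geqslant\int_{\mathcal S}\max\{v_A,0\}\dd\pi$ for every non-empty finite $A$ (the case $\int_{\mathcal S}\max\{v_A,0\}\dd\pi=0$ being trivial, as $\max\{v_A,0\}\geqslant0$). Next, given any non-empty finite $B\subset\mathbb{R}^\Theta$, choose $t\in\mathbb{R}$ large enough that $v_B+t>0$ on the compact simplex $\Delta(\Theta)$ and apply the inequality just obtained to $A\coloneqq\{b+t\mathbf 1:b\in B\}$, for which $v_A=v_B+t>0$ and hence $\max\{v_A,0\}=v_A$; this gives $\int_{\mathcal S}v_B\dd\pi'+t\geqslant\int_{\mathcal S}v_B\dd\pi+t$, i.e.\ $\int_{\mathcal S}v_B\dd\pi'\geqslant\int_{\mathcal S}v_B\dd\pi$. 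As this holds for all $B$, every firm discriminates against $\pi$ in the sense of \cref{sec:blackwell} — equivalently, there is ``(i) systematic discrimination against $\pi$ or (ii) no discrimination'' — so \hyperref[theorem:Blackwell]{Blackwell's Theorem} (whose standing hypothesis $p_\pi=p_{\pi'}$ we have) delivers $\pi'\MPS\pi$.

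The only real obstacle I anticipate is the ``if'' direction, and within it the idea that the interview cost $c$ can be neutralised by rescaling the decision problem, and the truncation $\max\{\cdot,0\}$ removed by shifting the problem upward by a large constant; once those two moves are in place the conclusion is handed to \hyperref[theorem:Blackwell]{Blackwell's Theorem}. Everything else is routine unwinding of \Cref{definition:hiring_disc} and \Cref{obs:normalization}.
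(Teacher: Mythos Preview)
Your argument is correct and uses the same ingredients as the paper's proof: the identity $\max\{v_A,0\}=v_{A\cup\{\mathbf 0\}}$ for the easy direction, and the rescaling $A\mapsto\lambda A$ together with the translation $A\mapsto\{a+t\mathbf 1:a\in A\}$ for the substantive direction, with \hyperref[theorem:Blackwell]{Blackwell's Theorem} closing the loop. The only differences are organisational: the paper argues the hard direction by contrapositive (start from $\pi'\nMPS\pi$, translate first to kill the truncation, then rescale to straddle $c$), whereas you argue it directly (rescale first to eliminate $c$, then translate to kill the truncation, then invoke Blackwell); and your preliminary reduction of ``(i) or (ii)'' to the single condition $\neg d_2$ is a tidy bookkeeping step that the paper leaves implicit.
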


\begin{proof}
	We shall (implicitly) use \Cref{obs:normalization} throughout.
	For the ``only if'' part, suppose that $\pi' \MPS \pi$.
	Then for any non-empty finite subset $A$ of $\mathbb{R}^\Theta$
	such that $\int_{\mathcal{S}} \max\{v_A,0\} \dd \pi > c$,
	we have
	\begin{equation*}
		\int_{\mathcal{S}} \max\{v_A,0\} \dd \pi'
		= \int_{\mathcal{S}} v_{A \cup \{0\}} \dd \pi'
		\geqslant \int_{\mathcal{S}} v_{A \cup \{0\}} \dd \pi
		= \int_{\mathcal{S}} \max\{v_A,0\} \dd \pi
		> c
	\end{equation*}
	by \hyperref[theorem:Blackwell]{Blackwell's Theorem},
	which is to say that there is either (i) systematic ex-ante discrimination against $\pi$ or (ii) no ex-ante discrimination.

	For the ``if'' direction, suppose that $\pi' \nMPS \pi$.
	Then by \hyperref[theorem:Blackwell]{Blackwell's Theorem}, there is a non-empty finite subset $A$ of $\mathbb{R}^\Theta$ such that
	$\int_{\mathcal{S}} v_A \dd \pi'
	< \int_{\mathcal{S}} v_A \dd \pi$.
	By choosing $\beta \geqslant 0$ large enough, we may ensure that $A' \coloneqq \{ a + \beta : a \in A \}$ satisfies
	\begin{equation*}
		\int_{\mathcal{S}} \max\{v_{A'},0\} \dd \pi'
		< \int_{\mathcal{S}} \max\{v_{A'},0\} \dd \pi .
	\end{equation*}
	Then a $\gamma>0$ may be found such that $A'' \coloneqq \{ \gamma a : a \in A' \}$ satisfies
	\begin{equation*}
		\int_{\mathcal{S}} \max\{v_{A''},0\} \dd \pi'
		\leqslant c
		< \int_{\mathcal{S}} \max\{v_{A''},0\} \dd \pi ,
	\end{equation*}
	so that there is neither (i) systematic ex-ante discrimination against $\pi$ nor (ii) no ex-ante discrimination.
\end{proof}

\begin{remark}\label{remark:lehrer-shmaya}
\Cref{proposition:lehrer-shmaya}'s characterization is valid for any interview cost $c>0$, but does not apply when $c=0$.
In that case, ex-ante discrimination is instead characterized as in \Cref{proposition:lehrer-shmaya}, except with $\MPS$ replaced by the more complete order $N$ defined by $\pi' \mathrel{N} \pi$ iff for every convex, continuous and \emph{non-negative} function $h : \Delta(\Theta) \rightarrow \mathbb{R}$, $\int_{\mathcal{S}}h\dd\pi > 0 $ implies $\int_{\mathcal{S}}h\dd\pi' > 0 $.
This result is due to \textcite[Theorem 2]{LehrerShmaya2008}.
\end{remark}

\renewcommand*{\thesection}{Appendix:}

\section{Proof that \texorpdfstring{$\boldsymbol{\MPS}$}{M} is anti-symmetric}
\label{sec:appendix}

Let $\mathcal{C}$ denote the space of all continuous functions $\Delta(\Theta) \to \mathbb{R}$, equipped with the sup metric.
Write $\mathcal{W} \subset \mathcal{C}$ for those functions that may be written as the difference of two continuous convex functions.

\begin{lemma}
	\label{lemma:denseness}
	$\mathcal{W}$ is dense in $\mathcal{C}$.
\end{lemma}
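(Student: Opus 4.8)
The plan is to exhibit a dense subset of $\mathcal{C}$ that is contained in $\mathcal{W}$; the natural candidate is the set of restrictions to $\Delta(\Theta)$ of polynomials in the coordinates, which is dense by the Stone--Weierstrass theorem. So it remains only to check that every such polynomial lies in $\mathcal{W}$, i.e.\ can be written as a difference of two continuous convex functions on $\Delta(\Theta)$.

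For this I would use the classical fact that a twice continuously differentiable function on a compact convex subset of Euclidean space is a difference of convex functions, here applied with $\Delta(\Theta)$ regarded as a compact convex subset of the finite-dimensional space $\mathbb{R}^\Theta$. Concretely, given a polynomial $p : \mathbb{R}^\Theta \to \mathbb{R}$, its Hessian $D^2 p$ is continuous, hence bounded, on the compact set $\Delta(\Theta)$; thus one can choose $\lambda > 0$ large enough that $D^2 p(x) + 2\lambda I$ is positive semidefinite for every $x \in \Delta(\Theta)$. Setting $q(x) \coloneqq \sum_{\theta \in \Theta} x_\theta^2$, the function $q$ is continuous and convex, and evaluating the second derivative of $p + \lambda q$ along each line segment contained in $\Delta(\Theta)$ shows this function is continuous and convex on $\Delta(\Theta)$. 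Hence $p|_{\Delta(\Theta)} = (p + \lambda q)|_{\Delta(\Theta)} - \lambda q|_{\Delta(\Theta)}$ realizes $p|_{\Delta(\Theta)}$ as a difference of two continuous convex functions, so $p|_{\Delta(\Theta)} \in \mathcal{W}$.

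To conclude I would invoke Stone--Weierstrass: the restrictions to $\Delta(\Theta)$ of polynomials form a subalgebra of $\mathcal{C}$ that contains the constants and separates points (already the coordinate functions do), and is therefore dense in $\mathcal{C}$. Since $\mathcal{W}$ contains this dense set and $\mathcal{W} \subseteq \mathcal{C}$, it follows that $\mathcal{W}$ is dense in $\mathcal{C}$.

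I do not expect a real obstacle here; the only points needing a little care are that $\Delta(\Theta)$ has empty interior in $\mathbb{R}^\Theta$ — harmless, since convexity of a function on $\Delta(\Theta)$ can be verified segment by segment — and that the constant $\lambda$ depends on $p$, which is fine because we decompose one polynomial at a time.
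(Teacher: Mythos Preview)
Your argument is correct and takes a genuinely different route from the paper's. The paper applies Stone--Weierstrass \emph{directly to} $\mathcal{W}$: it observes that $\mathcal{W}$ is a vector space, contains the constants, and separates points, and then invokes a result of Hartman (1959) to the effect that the product of two DC (difference-of-convex) functions is again DC, so that $\mathcal{W}$ is an algebra and Stone--Weierstrass applies. You instead apply Stone--Weierstrass to the polynomials and then show, via an elementary Hessian-shift argument, that every polynomial restricted to $\Delta(\Theta)$ lies in $\mathcal{W}$. Your route is more self-contained, since the closure of DC functions under multiplication is itself a small lemma one would otherwise have to supply or cite; the paper's route is shorter once that fact is taken for granted. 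Your care about the empty interior of $\Delta(\Theta)$ in $\mathbb{R}^\Theta$ is appropriate, and checking convexity segment-by-segment via the one-dimensional second derivative is exactly the right fix.
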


\begin{proof}
	$\mathcal{W}$ is a vector space since the sum of convex functions is also convex.
	It is also closed under pointwise multiplication \parencite[p. 708]{Hartman1959}, and is thus an algebra.
	In addition, $\mathcal{W}$ contains the constant functions, and it separates points in the sense that for any distinct $s,s' \in \Delta(\Theta)$, there is a $w \in \mathcal{W}$ with $w(s) \neq w(s')$.
	It follows by the Stone--Weierstrass Theorem
	\parencite[see e.g.][Theorem 4.45]{Folland1999}
	that $\mathcal{W}$ is dense in $\mathcal{C}$.
\end{proof}

\begin{proof}[Proof that $\MPS$ is anti-symmetric]
	Suppose that $\pi \MPS \pi' \MPS \pi$; we must show that $\pi = \pi'$.
	The hypothesis means that $\int_{\mathcal{S}} h \dd \pi = \int_{\mathcal{S}} h \dd \pi'$ for every convex and continuous $h: \Delta(\Theta) \rightarrow \mathbb{R}$,
	whence $\int_{\mathcal{S}} w \dd \pi = \int_{\mathcal{S}} w \dd \pi'$ for every $w \in \mathcal{W}$.
	By \Cref{lemma:denseness}, it follows that $\int_{\mathcal{S}} f \dd \pi = \int_{\mathcal{S}} f \dd \pi'$ for every $f \in \mathcal{C}$.

	Thus for any measurable $S \subset \mathcal{S}$,
	by choosing sequence $(f_n)_{n \in \mathbb{N}}$ in $\mathcal{C}$
	such that $\int_{\mathcal{S}} (f_n-1_S) \dd \pi$ and $\int_{\mathcal{S}} (f_n-1_S) \dd \pi'$ vanish as $n \to \infty$,
	we obtain
	\begin{equation*}
		\pi(S)-\pi'(S)
		= \int_{\mathcal{S}} 1_S \dd (\pi-\pi')
		= \lim_{n \to \infty} \int_{\mathcal{S}} f_n \dd (\pi-\pi')
		= 0 .
		\qedhere
	\end{equation*}
\end{proof}







\printbibliography[heading=bibintoc]


\end{document}